\journal{Elsevier}
\newtheorem{theorem}{Theorem}
\newtheorem{lemma}[theorem]{Lemma}
\theoremstyle{definition}
\newcommand{\F}{\mathbb{F}}
\newcommand{\E}{\mathbb{E}}
\newcommand{\KS}{{\rm KS}}
\newcommand{\KT}{{\rm KT}}
\begin{document}
\makeatletter
\def\ps@pprintTitle{%
  \let\@oddhead\@empty
  \let\@evenhead\@empty
  \let\@oddfoot\@empty
  \let\@evenfoot\@oddfoot
}
\makeatother

\begin{frontmatter}



\author[label1]{Tommaso Lando\corref{c1}} 
\fntext[label1]{Department of Economics, University of Bergamo, Italy; Department of Finance, V\v{S}B-TU Ostrava, Czech Republic}
\ead{tommaso.lando@unibg.it}

\author[label2]{Idir Arab} 
\fntext[label2]{CMUC, Department of Mathematics, University of Coimbra, Portugal}

\author[label2]{Paulo Eduardo Oliveira} 

\cortext[c1]{Corresponding author}

\title{Nonparametric inference about increasing odds rate distributions}




\begin{abstract}
To improve nonparametric estimates of lifetime distributions, we propose using the increasing odds rate (IOR) model as an alternative to other popular, but more restrictive, ``adverse ageing'' models, such as the increasing hazard rate one. This extends the scope of applicability of some methods for statistical inference under order restrictions, since the IOR model is compatible with heavy-tailed and bathtub distributions. We study a strongly uniformly consistent estimator of the cumulative distribution function of interest under the IOR constraint. Numerical evidence shows that this estimator often outperforms the classic empirical distribution function when the underlying model does belong to the IOR family. We also study two different tests, aimed at detecting deviations from the IOR property, and we establish their consistency. The performance of these tests is also evaluated through simulations.
\end{abstract}

\begin{keyword}
hazard rate\sep  heavy tails \sep nonparametric test \sep odds



\end{keyword}

\end{frontmatter}
\section{Introduction}
In statistics, the problem of estimating a cumulative distribution function (CDF) $F$, often {depends on any prior information on $F$}. {This is, for example, the idea that} gives rise to parametric inference. On the other hand, no information on $F$ leads to the most basic nonparametric estimator of $F$, namely, the empirical CDF $\F_n$. The {approach based on} shape restricted inference represents an intermediate case, in which it is assumed that $F$ belongs to some important{, and as broad as possible,} nonparametric family of distributions, satisfying some shape constraints. It is obvious that, using a stronger shape constraint, one may expect a larger estimation improvement upon $\F_n$, but, on the other hand, less applicability and higher risk, if the constraint is false. Thus, these shape restrictions should be chosen to reasonably conform to the data, and validated through statistical testing.

In reliability and survival analysis, one is generally interested in random lifetimes, so it is typically assumed that distributions may exhibit ``adverse ageing'', vaguely meaning that ageing has a negative effect on lifetime \citep{marshall2007}. The classic (but not unique) way of translating this intuitive notion into a mathematical language is assuming that $F$ has an \textit{increasing hazard rate} (IHR). For this reason, nonparametric inference for IHR distributions represents an extremely relevant case, which has been studied by \cite{grenander}, \cite{marshallmle} or \cite{rao}, {among} many other authors. {However}, the IHR assumption is sometimes considered to be too strong. For example, it is not compatible with heavy-tailed ({as it requires} the existence of all moments) and \textit{bathtub} distributions, namely, distributions that exhibit a U-shaped hazard rate \citep{marshall2007}. The solution to this problem may be using a weaker shape restriction, which is still coherent with the ``adverse ageing'' notion. For example, \cite{Wang1987}, \cite{rojo} or \cite{elbarmi} estimate distributions which exhibit an IHR, {but only} on average (IHRA). Although the IHRA class contains the IHR class, this family still requires the existence of all moments and does not contain any bathtub model. Therefore, the problem of estimating $F$ within a weaker ``adverse ageing'' setting, compatible with heavy-tailed and bathtub distributions, is particularly interesting.

In a recent paper, \cite{oddspaper} proposed an ageing class that is based on the {monotonicity of the} \textit{odds rate} (OR), instead of the hazard rate. In fact, ageing classes are typically defined in terms of some suitable stochastic ordering constraint, taking the exponential distribution as the classic benchmark for ``no ageing''. This is due to the well-known ``lack of memory'' property. However, a wider family than the IHR may be constructed by replacing the exponential with a different reference distribution, providing an alternative interpretation of ``no ageing''. In particular, as discussed in \cite{oddspaper}, the log-logistic distribution (with shape parameter equal to 1), hereafter referred to as LL(1), may be a suitable benchmark, as it satisfies a ``multiplicative lack of memory'' property \citep{galambos}, and it is the only distribution such that the \textit{odds of failure} by time $x$, that is the probability of failure over survival at $x$, has a constant growth rate with respect to time. Moreover, unlike the exponential distribution, the LL(1) {distribution} has an infinite expectation, which is also more coherent with {a} literal {intuitive} translation of the ``no ageing'' concept. The family of distributions that are dominated by the LL(1) with respect to the \textit{convex transform order} of \cite{vanzwet1964} is called the \textit{increasing odds rate} (IOR) family. Besides having some interesting properties, the IOR class has a wide range of applicability{:} it does not require the existence of moments, it contains all IHR distributions, plus some important heavy-tailed and bathtub ones (see Section 2.1 of \cite{oddspaper}). Accordingly, the objective of this paper is to obtain an estimator of $F$ {that satisfies the IOR assumption}, {hence improving} upon $\F_n$ under weak conditions. In Section 3, we construct such an estimator, denoted as $\widetilde{\F}_n$, using similar methods as those discussed in the books of \cite{barlow1971} or \cite{robertson1988}. $\widetilde{\F}_n$ is strongly uniformly consistent, and simulations show that it is generally more accurate than $\F_n$, provided that the IOR assumption holds. On the other hand, it is clear that the method does not work properly when the IOR assumption is false, therefore it is important to decide, based on statistical testing, whether the data support the IOR assumption or not. In Section 4, we obtain two different nonparametric tests of the IOR null hypothesis. We study the properties of these tests, establish their consistency, and compare their performances simulating the values of the power function under some critical alternatives.

\section{The IOR property}
Let us begin with some preliminary notations. Throughout this paper, ``increasing'' signifies ``non-decreasing'' and ``decreasing'' signifies ``non-increasing''. We define the \textit{greatest convex minorant} (GCM) of a function $g$, denoted as $g_{cx}$, as the largest convex function that does not exceed $g$. Similarly, the \textit{least concave majorant} (LCM) of $g$, denoted as $g_{cv}$, is the smallest concave function which is larger than or equal to $g$. {Throughout this paper,} $X$ {denotes} a random variable with an absolutely continuous CDF $F$ and probability density function (PDF) $f$. {Moreover}, for the sake of simplicity, any map $\phi$ which depends on a CDF $F$, will be denoted as $\phi_F$ or just $\phi$, whenever it is clear from the context.

We will now recall the definition and properties of the IOR family. The reader is referred to \cite{oddspaper} for a more detailed discussion.

Let us denote with $L(x)=\frac x{1+x}$, $x>0$, the CDF of the LL(1) distribution. The function $$\Lambda_F(x)=L^{-1}\circ F(x)=\frac{F(x)}{1-F(x)}$$ defines the odds of failure by time $x$. The function $\Lambda_F$ is obviously increasing, so it is natural to get interested in its growth rate. In fact, one may expect that $F$ exhibits ``no ageing'', in some sense, when the \textit{odds rate} of $F$, defined as
$${\lambda_F(x)=\Lambda_F'(x)=\frac{f(x)}{(1-F(x))^2}},$$
is constant. Similarly, an increase of $\lambda$ means that the probability of failure over survival is accelerating with respect to time, suggesting an ``adverse ageing'' scenario. Therefore, we are interested in distributions with an increasing behaviour of $\lambda$, or, equivalently, with a convex odds function $\Lambda$.
This family, formally $\mathcal{F}_{IOR}=\{F:\Lambda_F\text{ is convex}\}$, is referred to as the IOR class.

{Note that the convexity of $\Lambda$ does not imply the existence of a density. Indeed, this convexity allows for a single point mass of $F$ at the right endpoint of the support. However, for simplicity of the presentation, we shall assume the existence of a PDF}.

It is easy to see that the IOR class contains the well known IHR class, {which may be characterized as} $\{F:-\ln(1-F)\text{ is convex}\}$. These two classes may be equivalently defined in terms of the convex transform order, requiring convexity of $G^{-1}\circ F$, where $G$ is some {suitable} (and fixed) reference CDF.
In {general}, $G$ {dominates} $F$ in the convex transform order, denoted as {$G\geq_c F$}, if $G^{-1}\circ F$ is convex \citep{shaked2007}. The IHR class is obtained by choosing $G(x)=\mathcal{E}(x)=1-e^{-x}$, $x>0$, namely, the unit exponential distribution. The main reason for choosing $\mathcal{E}$ as a benchmark is that it satisfies the ``lack of memory'' property. {Likewise}, for $G={L}$, we obtain the IOR class, and it is also easy to see that, if $X\sim L$, then the shifted random variable $X+1$ satisfies the ``multiplicative lack of memory'' property, that is, $P(X+1>ab|X+1>a)=P(X+1>b)$, for every $a,b>1$ \citep{galambos}. Bear in mind that ageing notions are typically location and scale independent, in particular, $X$ and $\sigma X+\mu$ are equivalent in terms of the convex transform order. In this regard, we may refer to the LL(1) location-scale family, determined by the CDF ${L}$ up to location and scale transformations.

We now discuss the applicability of the IOR model. First, note that the IOR class may contain bathtub distributions, characterised by a {``decreasing then increasing'' behaviour of the} HR, and are commonly employed in survival analysis and reliability ({see, for example,} \cite{marshall2007}, {or} \cite{glaser}). For {instance}, under some conditions on the parameters, the bathtub distributions defined by \cite{toppleone}, \cite{hjorth1980}, \cite{schabe} or \cite{haupt1997} are IOR. {This} class contains also many heavy-tailed distributions, such as the log-logistic, the Pareto, the Burr XII, the Fr\'{e}chet, the Student's $t$ (with shape parameter(s) larger than or equal to 1), the lognormal (under some conditions) and the Cauchy distributions.
It is not possible to establish {inclusion relations} between the IOR class and other relevant classes which contain the IHR class, such as the IHRA and the DMRL families \citep{marshall2007}. However, like the IHR, these latter classes require the existence of all moments, therefore they cannot be used as shape constraints to estimate heavy-tailed distributions, which is one of the main advantages of the IOR model. This discussion motivates us to study nonparametric inference on lifetime distributions using the IOR constraint, instead of the IHR one.

\section{An IOR estimator}
Since we are interested in random lifetimes, hereafter we will focus on the case in which the distribution of interest $F$ has support included in the interval $[0,\infty)$. Given an ordered sample $X_{(1)},\ldots,X_{(n)}$ from $F$, the empirical CDF is defined as $\F_n(x)=\frac1n\sum_{i=1}^{n}{\mathbf{1}_{X_{(i)} \leq x}}$, where $\mathbf{1}_A$ is the indicator of the event $A$. A realisation of the random process $\F_n$ will be denoted with $F_n$. The empirical CDF represents the most natural way of estimating $F$ without any prior information, as it satisfies several important properties{, in particular}, the Glivenko-Cantelli Theorem establishes strong and uniform convergence of $\F_n$ to $F$. Assume that $F\in\mathcal{F}_{IOR}$: we are interested in determining an estimator, say $\widetilde{\F}_n$, such that $\widetilde{\F}_n\in\mathcal{F}_{IOR}$ and $\widetilde{\F}_n$ converges strongly and uniformly to $F$.
A first intuitive solution would be estimating $\Lambda_F$ using its empirical counterpart $\Lambda_{\F_n}=L^{-1}\circ \F_n=\frac{\F_n}{1-\F_n}$ (note that $\Lambda_{\F_n}(X_{(i)})=\frac i{n-i}$) and then estimate $F$ using $L\circ (\Lambda_{\F_n})_{cx}$. However, the function $L^{-1}(p)=\frac{p}{1-p}$ is unbounded in {$[0,1)$}, accordingly, for every sample size, $\sup{_{x}} |\Lambda_{\F_n}{(x)}-\Lambda{(x)}|=\infty$, and the corresponding IOR estimator defined earlier would not have the desired property. Therefore, we will consider a somewhat more involved approach.

For some given CDF $F\in\mathcal{F}_{IOR}$, let us define the following integral transform
\begin{equation*}T_F^{-1}(p)=Z_F\circ F^{-1}(p)=\int_0^{F^{-1}(p)}(1-F(x))^2\,dx,\quad p\in[0,1],
\end{equation*}
where $Z_F(t)=\int_0^t(1-F(x))^2\,dx$, and the corresponding scale-free version $\overline{T}_F^{-1}(p)=\frac{T_F^{-1}(p)}{T_F^{-1}(1)}$. We shall focus on distributions such that $T_F^{-1}(1)=\int_0^{\infty}(1-F(x))^2\,dx<\infty$. In particular, $T^{-1}(1)<\infty$ if $1-F$ is integrable, that is, if $F$ has finite mean, because {$\int_0^{\infty}(1-F(x))^2\,dx\leq\int_0^{\infty}(1-F(x))\,dx={\E(X)}$. However, the existence of the mean is not necessary{:} for example, for the CDF $L$ we have $T_L^{-1}(1)=1$, although $L$ has infinite mean. The function $T^{-1}$ belongs to the family of \textit{generalized total time on test} transforms, which has been studied by \cite{barlowzwet}. Note also that the inverse of $T_F^{-1}$, namely $T_F=F\circ Z^{-1}_F$, is a CDF with support $[0,T_F^{-1}(1)]$.

It can be seen that\begin{equation}\label{oddsttt}(T^{-1})'(p)=Z'\circ F^{-1}(p)(F^{-1})'(p)=\frac{(1-F\circ F^{-1}(p))^2}{f\circ F^{-1}(p)},\end{equation}
so that $T^{-1}$ is concave, or, {equivalently,} $T$ is convex, if and only if $\frac{f(x)}{(1-F(x))^2}$ is increasing, that is, $F$ is IOR. Note that $T^{-1}$ is the identity function if and only if $F$ is an LL(1) distribution.

For $k=1,\ldots,n$, the empirical counterpart of $T_F^{-1}(\frac kn)$ is
\begin{equation*}
T^{-1}_{\F_n}\left(\frac{k}n\right)=\int_0^{X_{(k)}}(1-\F_n(t))^2\,dt=\frac1{n^2}\left(\sum_{i=1}^{k-1}(2n-2i+1)X_{(i)}+(n-k+1)^2 X_{(k)}\right),
\end{equation*}
where, in particular, $X_{(0)}:=0$, $T_{\F_n}^{-1}{(0)}:=0$, and
$T^{-1}_{\F_n}(1)=\frac1{n^2}\sum_{i=1}^{n}(2n-2i+1)X_{(i)}.$
Note that $T^{-1}_{\F_n}(\frac kn)$ can also be expressed as
$$
T^{-1}_{\F_n}\left(\frac kn\right)=\sum_{j=1}^k\left(\frac{n-j+1}n\right)^2(X_{(j)}-X_{(j-1)}).  $$
For {general} $p\in[0,1]$, $T_{\F_n}^{-1}(p)$ is defined, by linear interpolation, as the piecewise linear function joining the points $\left(\frac {k-1}n,\,T^{-1}_{\F_n}\left(\frac{k-1}n\right)\right)$ and $\left(\frac {k}n,\,T^{-1}_{\F_n}\left(\frac{k}n\right)\right)$, $k=1,\ldots,n$. Similarly, the estimator of $\overline{T}^{-1}$ is $\overline{T}^{-1}_{\F_n}=\frac{T^{-1}_{\F_n}(p)}{T^{-1}_{\F_n}(1)}$. These estimators are strongly uniformly consistent.
\begin{lemma}\
\begin{itemize}
\item $T^{-1}_{\F_n}$ ($\overline{T}^{-1}_{\F_n}$) converges strongly and uniformly to $T_F^{-1}$ ($\overline{T}_F^{-1}$) in $[0,1]$;
\item $T_{\F_n}$ ($\overline{T}_{\F_n}$) converges strongly and uniformly to $T_F$ ($\overline{T}_{F}$) in $[0,T_F^{-1}(1)]$ ($[0,1]$).
\end{itemize}
\end{lemma}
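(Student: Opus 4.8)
The plan is to prove the first bullet (uniform strong consistency of $T^{-1}_{\F_n}$ on $[0,1]$) and deduce the other three assertions from it. The normalized transform $\overline{T}^{-1}_{\F_n}=T^{-1}_{\F_n}/T^{-1}_{\F_n}(1)$ converges uniformly once we also know $T^{-1}_{\F_n}(1)\to T^{-1}_F(1)>0$ a.s., since dividing a uniformly convergent sequence by a scalar sequence with positive limit preserves uniform convergence. The assertions about $T_{\F_n}$ and $\overline{T}_{\F_n}$ follow from those about the generalized inverses via the standard fact that if continuous increasing functions converge uniformly to a continuous, strictly increasing limit, then their inverses converge uniformly too (here $T^{-1}_F$ is strictly increasing since $(T^{-1}_F)'(p)=(1-p)^2/f(F^{-1}(p))>0$, cf.\ \eqref{oddsttt}). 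Thus everything reduces to the uniform strong convergence of $T^{-1}_{\F_n}$ to $T^{-1}_F$ on $[0,1]$.

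For the crux I would use a Polya-type argument. Both $T^{-1}_{\F_n}$ and $T^{-1}_F$ are increasing, and $T^{-1}_F$ is continuous on the compact interval $[0,1]$; hence it suffices to establish a.s.\ pointwise convergence on a countable dense subset $D\subset[0,1]$ together with the endpoints $0$ and $1$, because pointwise convergence of monotone functions to a continuous monotone limit on such a set upgrades to uniform convergence (applied pathwise on the probability-one event where convergence holds at all points of the countable set). The endpoint $p=0$ is trivial, as $T^{-1}_{\F_n}(0)=T^{-1}_F(0)=0$.

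For an interior $p$, which I take to be a continuity point of $F^{-1}$ (a co-countable set), I write $T^{-1}_{\F_n}(k/n)=Z_{\F_n}(X_{(k)})$ with $Z_{\F_n}(t)=\int_0^t(1-\F_n)^2\,dx$, and use $X_{(\lceil np\rceil)}\to F^{-1}(p)$ a.s. Since $Z_F$ is $1$-Lipschitz and, on any compact $[0,M]$, $|Z_{\F_n}-Z_F|\le 2\int_0^M|\F_n-F|\,dx\le 2M\sup_x|\F_n-F|\to 0$ a.s.\ by Glivenko--Cantelli, we obtain $Z_{\F_n}(X_{(\lceil np\rceil)})\to Z_F(F^{-1}(p))=T^{-1}_F(p)$; the interpolated value $T^{-1}_{\F_n}(p)$ lies between the two neighbouring grid values, which share this limit, so pointwise convergence follows.

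The main obstacle is the right endpoint $p=1$, i.e.\ the a.s.\ convergence of the total mass $T^{-1}_{\F_n}(1)=\int_0^\infty(1-\F_n)^2\,dx$ to $T^{-1}_F(1)=\int_0^\infty(1-F)^2\,dx$ — precisely the quantity for which the naive estimator discussed before the lemma fails. The key observation is that $\int_0^\infty(1-\F_n)^2\,dx=n^{-2}\sum_{i,j}\min(X_i,X_j)$, a $V$-statistic with kernel $\min$. Its off-diagonal part equals $\tfrac{n-1}{n}$ times a $U$-statistic converging a.s.\ to $\E[\min(X_1,X_2)]=\int_0^\infty(1-F)^2\,dx$ by the strong law for $U$-statistics, the kernel being integrable precisely because $T^{-1}_F(1)<\infty$. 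It then remains to control the diagonal $n^{-2}\sum_i X_i$, and here I would invoke the IOR hypothesis, which forces $1-F(x)=O(1/x)$: the convex, increasing, nonnegative odds function $\Lambda_F$ eventually dominates a line of positive slope, so $1-F=1/(1+\Lambda_F)$ decays at least like $1/x$. This yields $\E[X^{1/2}]<\infty$, whence $n^{-2}\sum_i X_i\to 0$ a.s.\ by the Marcinkiewicz--Zygmund strong law. Combining the two pieces gives $T^{-1}_{\F_n}(1)\to T^{-1}_F(1)$ a.s., supplying the last pointwise limit needed by the Polya argument and completing the proof.
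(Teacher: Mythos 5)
Your proof is correct, but it takes a genuinely different route from the paper. The paper disposes of the first bullet in one line by observing that $T^{-1}_{\F_n}$ is a generalized total time on test transform and citing Theorem~2.1 of Barlow and van Zwet for its strong uniform consistency; the remaining assertions then follow exactly as in your reduction (inversion of a monotone limit, division by the a.s.\ convergent scalar $T^{-1}_{\F_n}(1)$). You instead prove the key convergence from scratch: a Polya-type upgrade from pointwise to uniform convergence for monotone functions, with the interior points handled by Glivenko--Cantelli plus the Lipschitz property of $Z_F$, and the delicate right endpoint handled by recognising $T^{-1}_{\F_n}(1)=n^{-2}\sum_{i,j}\min(X_i,X_j)$ as a $V$-statistic, applying the U-statistic SLLN to the off-diagonal part and Marcinkiewicz--Zygmund to the diagonal. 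This is a nice, self-contained argument, and the $V$-statistic identity is exactly the right way to see why the total mass converges despite the possibly infinite mean. The one trade-off worth flagging: your control of the diagonal term $n^{-2}\sum_i X_i$ invokes the IOR hypothesis to get $1-F(x)=O(1/x)$ and hence $\E[X^{1/2}]<\infty$. That is legitimate under the standing assumption of Section~3, but the paper's citation-based proof needs only $T^{-1}_F(1)<\infty$, and the lemma is later reused in the consistency proof of Theorem~3 under $\mathcal{H}_1$, where $F$ is \emph{not} IOR; to cover that application you would need to replace the IOR-based tail bound with an argument working under $\int_0^\infty(1-F)^2\,dx<\infty$ alone (or a separate mild tail condition), since that hypothesis by itself does not guarantee $\E[X^{1/2}]<\infty$.
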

\begin{proof}
The strong uniform convergence of $T^{-1}_{\F_n}$ is a special case of Theorem 2.1 of \cite{barlowzwet}. Since $T^{-1}_{\F_n}$ is bounded and has bounded domain, its inverse function $T_{\F_n}$ is also strongly uniformly consistent. Since $T^{-1}_{\F_n}(1)$ converges a.s. to $T^{-1}_{F}(1)$, it is clear that the same results hold for $\overline{T}^{-1}_{\F_n}$ and, likewise, for $\overline{T}_{\F_n}$.
\end{proof}
Henceforth, we will abbreviate $T^{-1}_{\F_n}$ and $\overline{T}^{-1}_{\F_n}$ with $T^{-1}_n$ and $\overline{T}^{-1}_{n}$, respectively, whenever it will be clear from the context that they correspond to a sample from $\F_n$.

{Note that under} the IOR constraint, $T^{-1}$ {is} concave. Therefore, an IOR estimator may be constructed using $(T_n^{-1})_{cv}$, namely, the LCM of $T_n^{-1}$. {In fact, $({T}_n^{-1})_{cv}$ is a concave piecewise linear function, so it always has a right} derivative, {denoted as} $\partial_+({T}_n^{-1})_{cv}$, which, by construction, is a decreasing right-continuous step function. Accordingly, {the relation in} (\ref{oddsttt}) may be used to obtain an increasing estimator of the OR, defined as follows:
\begin{equation*}\widetilde{\lambda}_n(x)=\frac1{\partial_+({T}_n^{-1})_{cv}\circ \F_n(x)},\;\text{ for } x\in(0,X_{(n)}),\end{equation*}
whereas $\widetilde{\lambda}_n(x)=0$, for $x\leq0$, and $\widetilde{\lambda}_n(x)=+\infty$, for $x\geq X_{(n)}$. In other words, $\widetilde{\lambda}_n(x)$ is the reciprocal of the slope (from the right) of $({T}_n^{-1})_{cv}$, evaluated at the point $\F_n(x)$. Accordingly, the odds function is estimated by
\begin{equation}\label{ORestimator}\widetilde{\Lambda}_n(x)=\int_0^x{\widetilde{\lambda}_n(t)\,dt}=\int_0^x\frac 1{\partial_+(T_n^{-1})_{cv}\circ \F_n(t)}\,dt.
\end{equation}
Finally, an IOR estimator of $F$ is given by $$
\widetilde{\F}_n=L\circ \widetilde{\Lambda}_n=1-\frac1{1+\widetilde{\Lambda}_n}.
$$
$\widetilde{\F}_n$ is clearly IOR, and it is absolutely continuous, except in the right endpoint of its support, $X_{(n)}$, at which it has a jump. Correspondingly, the PDF $f$ can be estimated, in $(0,X_{(n)})$, as
\begin{equation}
\label{eq:PDF_est}
\widetilde{f}_n=\frac{\widetilde{\lambda}_n}{(1+\widetilde{\Lambda}_n)^2}.
\end{equation}
Using the methods described in the book of \cite{robertson1988}, it is possible to {establish the asymptotic properties of these estimators}.
\begin{theorem}\label{estimators}
{Assume that $F$ is IOR, then:
\begin{enumerate}
\item $\widetilde{\lambda}_n \longrightarrow \lambda$ almost surely and uniformly in $[0,x_0]$, for every $x_0$ such that $\lambda(x_0)<\infty$;
\item $\widetilde{\F}_n{\longrightarrow} F$ almost surely and uniformly in $[0,\infty)$;
\item $\widetilde{f}_n{\longrightarrow} f$ almost surely and uniformly in $[0,x_0]$, for every $x_0$ such that $\lambda(x_0)<\infty$.
\end{enumerate}}
\end{theorem}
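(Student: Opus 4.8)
The plan is to derive all three statements from a single analytical fact: the right-hand slopes of the least concave majorant $({T}_n^{-1})_{cv}$ converge, uniformly on compact subsets of $[0,1)$, to the derivative $(T_F^{-1})'$. Because $F$ is IOR, $T_F^{-1}$ is concave and hence coincides with its own least concave majorant, and by the preceding Lemma $T_n^{-1}\to T_F^{-1}$ strongly and uniformly on $[0,1]$. The first step is to transfer this convergence to the majorants themselves: the least concave majorant operator $g\mapsto g_{cv}$ is non-expansive in the sup-norm (Marshall's lemma; see \cite{robertson1988}), since if $g\le h+c$ then $g_{cv}\le h_{cv}+c$. Consequently $\sup_{p\in[0,1]}\bigl|({T}_n^{-1})_{cv}(p)-T_F^{-1}(p)\bigr|\le\sup_{p\in[0,1]}\bigl|T_n^{-1}(p)-T_F^{-1}(p)\bigr|\to0$ almost surely.

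Write $h_n=({T}_n^{-1})_{cv}$, $s_n=\partial_+ h_n$, and $s=(T_F^{-1})'$; by \eqref{oddsttt} one has $s(p)=1/\lambda\bigl(F^{-1}(p)\bigr)$. At any point $p_0$ where $T_F^{-1}$ is differentiable, concavity of $h_n$ gives the sandwich $\frac{h_n(p_0+\varepsilon)-h_n(p_0)}{\varepsilon}\le s_n(p_0)\le\frac{h_n(p_0)-h_n(p_0-\varepsilon)}{\varepsilon}$ for every $\varepsilon>0$; letting $n\to\infty$ (using the uniform convergence from the first step) and then $\varepsilon\to0$ yields $s_n(p_0)\to s(p_0)$. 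Each $s_n$ is decreasing and, on $[0,F(x_0)]$, the limit $s$ is decreasing, continuous, and bounded below by $1/\lambda(x_0)>0$ (since $\lambda$ is increasing and $\lambda(x_0)<\infty$); pointwise convergence of monotone functions to a continuous monotone limit is automatically uniform (a P\'olya-type argument), so $s_n\to s$ uniformly on $[0,F(x_0)]$. I expect this uniform upgrade to be the main obstacle, since it genuinely requires $s$ --- equivalently $\lambda$ --- to be continuous on $[0,x_0]$: at a jump of $\lambda$ the step functions $s_n$ cannot track the limit uniformly, as their (data-dependent) jumps do not align with that of $s$, and only pointwise convergence survives.

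Claim (1) now follows by composing with the empirical CDF. Since $s$ is uniformly continuous and bounded away from $0$ on $[0,F(x_0)]$ and $\F_n\to F$ uniformly almost surely (Glivenko--Cantelli), the bound $|s_n(\F_n(x))-s(F(x))|\le\|s_n-s\|_\infty+|s(\F_n(x))-s(F(x))|$ shows $s_n\circ\F_n\to s\circ F$ uniformly on $[0,x_0]$; as the limit stays away from $0$, taking reciprocals gives $\widetilde{\lambda}_n=1/(s_n\circ\F_n)\to 1/(s\circ F)=\lambda$ uniformly on $[0,x_0]$.

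For claim (2), integrating (1) gives $\widetilde{\Lambda}_n\to\Lambda$ uniformly on each $[0,x_0]$ with $\lambda(x_0)<\infty$, and since $L$ is $1$-Lipschitz, $\widetilde{\F}_n=L\circ\widetilde{\Lambda}_n\to L\circ\Lambda=F$ uniformly there. To reach $[0,\infty)$, fix $\varepsilon>0$ and choose $x_1$ with $\lambda(x_1)<\infty$ and $F(x_1)>1-\varepsilon$ (possible because $F$ is continuous and attains $1$ only at the right endpoint of its support); on $[0,x_1]$ apply the convergence just obtained, while on $(x_1,\infty)$ monotonicity of $\widetilde{\Lambda}_n$ together with $\widetilde{\Lambda}_n(x_1)\to\Lambda(x_1)$ forces both $F$ and $\widetilde{\F}_n$ to lie within $O(\varepsilon)$ of $1$, so their difference is uniformly small. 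Finally, claim (3) is a continuous-mapping consequence of (1) and (2): writing $\widetilde{f}_n=\widetilde{\lambda}_n/(1+\widetilde{\Lambda}_n)^2$ and $f=\lambda/(1+\Lambda)^2$, with denominators bounded below by $1$, the uniform convergences $\widetilde{\lambda}_n\to\lambda$ and $\widetilde{\Lambda}_n\to\Lambda$ on $[0,x_0]$ transfer directly to $\widetilde{f}_n\to f$.
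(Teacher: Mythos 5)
Your proposal is correct and follows essentially the same route as the paper: Marshall's inequality to transfer uniform convergence of $T_n^{-1}$ to its least concave majorant, the concavity sandwich on difference quotients to control the one-sided slopes, the monotone-plus-continuous-limit upgrade to uniform convergence of $\widetilde{\lambda}_n$ on $[0,x_0]$, and then integration and composition with $L$ followed by a Glivenko--Cantelli-type tail argument for parts (2) and (3). The only cosmetic difference is that you run the sandwich on the slopes $\partial_+(T_n^{-1})_{cv}$ in the $p$-variable and compose with $\F_n$ afterwards, whereas the paper bounds $1/\widetilde{\lambda}_n(x_0)$ directly; the substance is the same.
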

\begin{proof}
{Following the approach as in Theorem 7.4.1 of \cite{robertson1988} we prove that
$\lambda(x_0^-)\leq\liminf_{n\rightarrow\infty} \widetilde{\lambda}_n(x_0)\leq \limsup_{n\rightarrow\infty} \widetilde{\lambda}_n(x_0)\leq \lambda(x_0^+)$.}
{Indeed, since} the LCM $(T_n^{-1})_{cv}$ is concave by construction, then, for some small $\epsilon>0${, we have that}
\begin{multline*}\frac{(T_n^{-1})_{cv}(\F_n(x_0)+\epsilon)-(T_n^{-1})_{cv}(\F_n(x_0))}{\epsilon}\leq {\partial_+(T_n^{-1})_{cv}\circ \F_n(x_0)}=  \frac1{\widetilde{\lambda}_n(x_0^+)}\\
\leq \frac1{\widetilde{\lambda}_n(x_0^-)}={\partial_-(T_n^{-1})_{cv}\circ \F_n(x_0)}\leq \frac{(T_n^{-1})_{cv}(\F_n(x_0))-(T_n^{-1})_{cv}(\F_n(x_0)-\epsilon)}{\epsilon},
\end{multline*}
for every $x_0>0$ and $\epsilon\in(0,\min(F_n(x_0),1-F_n(x_0)))$. $T^{-1}_n$ converges uniformly to $T^{-1}$ on $[0,1]$, {consequently, the} Marshall's inequality \citep[Exercise 3.1]{groeneboom2014} gives$$\sup_{p\in[0,1]}|(T^{-1}_n)_{cv}(p)-T^{-1}(p)|\leq \sup_{p\in[0,1]}|{T}^{-1}_n(p)-T^{-1}(p)|\longrightarrow 0,$$
with probability 1. Then, letting $n\longrightarrow \infty$ in the above inequality, we obtain
\begin{multline*}\frac{{T}^{-1}(F(x_0)+\epsilon)-{T}^{-1}(F(x_0))}{\epsilon}\leq \limsup_{n\rightarrow\infty} \frac{1}{\widetilde{\lambda}_n(x_0)}\\ \leq\liminf_{n\rightarrow\infty} \frac{1}{\widetilde{\lambda}_n(x_0)} \leq \frac{{T}^{-1}(F(x_0))-{T}^{-1}(F(x_0)-\epsilon)}{\epsilon},
\end{multline*}
which gives the claimed bound {allowing} $\epsilon\longrightarrow 0$.
In particular, $\widetilde{\lambda}_n$ is uniformly strongly consistent at the continuity points of $\lambda$. Now, if $\lambda$ is increasing and continuous on some compact set $[0,x]$, $\widetilde{\lambda}_n$ converges uniformly to $\lambda$ on $[0,x]$. {Setting $x=x_0$, we obtain part 1).}
Subsequently, the dominated convergence theorem (note that $\widetilde{\lambda}_n$ is bounded on $[0,x]$) implies $\lim_{n\rightarrow\infty}\int_0^y\widetilde{\lambda}_n(t)\,dt=\lim_{n\rightarrow\infty}\widetilde{\Lambda}_n(y)=\int_0^y{\lambda}(t)\,dt=\Lambda(y)$, for every $y\in[0,x]$. Therefore, since the CDF $L$ is uniformly continuous in $[0,\infty)$, {it follows that} $$\widetilde{\F}_n(x)=L\circ \widetilde{\Lambda}_n(x)\longrightarrow  L\circ{\Lambda}(x)= F(x)$$
almost surely, for every $x$. Finally, following the same arguments in the proof of the Glivenko-Cantelli theorem, we can ensure that $\widetilde{\F}_n\longrightarrow  F$ almost surely and uniformly in $[0,\infty)$. {Since $f(x)=\lambda(x)(1-F(x))^2$ and, writing $\widetilde{\Lambda}_n=\Lambda_{\widetilde{\F}_n}$, we may express $\widetilde{f}_n(x)=\widetilde{\lambda}_n(x)(1-\widetilde{\F}_n(x))^2$. Then, part 1) and part 2) imply that $\widetilde{f}_n\longrightarrow f$ strongly and uniformly in $[0,x_0]$.}
\end{proof}

\subsection{Simulations}
We illustrate the numerical performance of $\widetilde{\F}_n$ and $\F_n$ in terms of MSE, as is well known, $\text{MSE}(\F_n(x))=\frac1nF(x)(1-F(x))$. {With regard to $\widetilde{\F}_n$}, the MSE was simulated using the following distributions: {\textit{(i)}} log-logistic, $F(x)=\frac1{1+x^{-a}}$, $a>0$, hereafter LL($a$), which is IOR for $a>1$ and has a constant OR for $a=1$; {\textit{(ii)}} Weibull $F(x)=1-e^{-x^a}$, $a>0$, hereafter W($a$), which is IOR for $a\geq1$ (and also log-concave, IHR, IHRA and DMRL); {\textit{(iii)}} beta type II, $F(x)=\beta(\frac{x}{1+x};a,b)$, $a,b>0$, hereafter B2($a,b$), which is IOR for $a>1,b\geq 1$ ($\beta$ denotes the regularized incomplete beta function){; {\textit{(iv)}} the Haupt and Schabe's distribution \citep{haupt1997} $F(x)=\sqrt{a^2+(2 a+1) x}-a$, {$a>-1/2$}, $x\in[0,1]$, hereafter HS($a$), which is bathtub for $a\in(\frac12,1)$ and IOR at the same time. Note that scale parameters are conveniently not considered in our analysis, since ageing properties are scale invariant.} {Figure \ref{f1-1} illustrates the estimators $\tilde{\F}_n$ and $\F_n$ in the IOR case, using a small sample size. Differently, the behaviour of such estimators in the non-IOR case is discussed in the next section and depicted in Figure \ref{f3}.}

\begin{figure}
\centering
\includegraphics[scale=0.8]{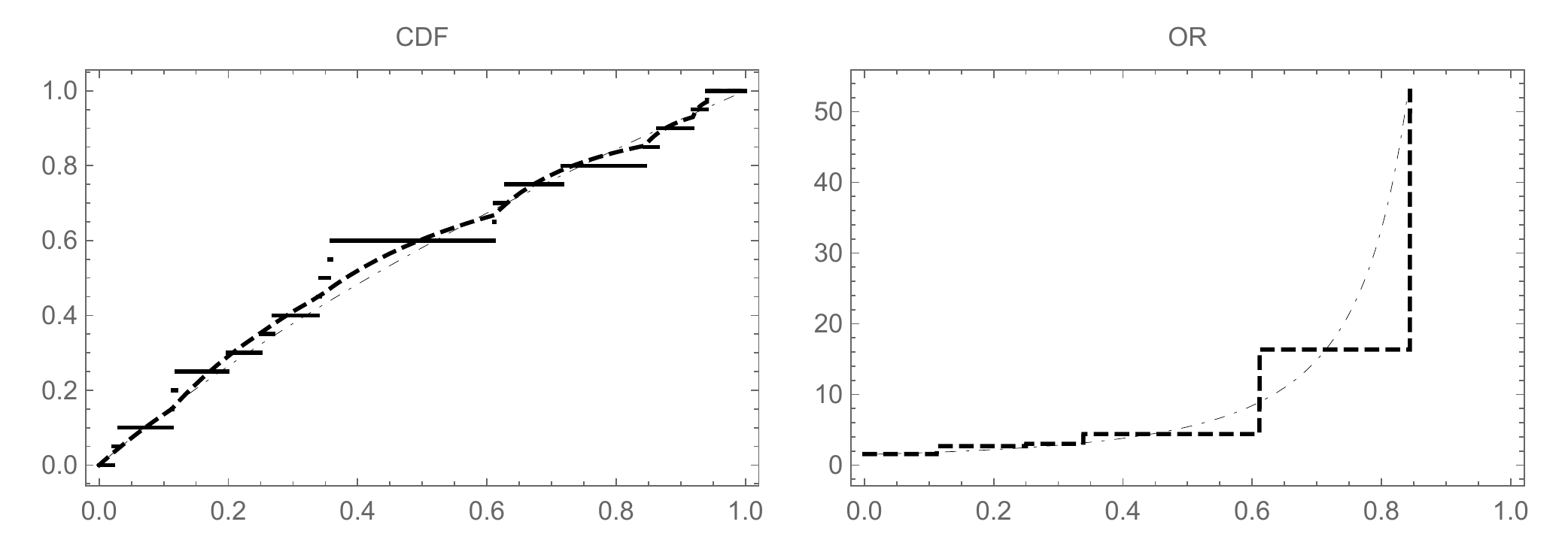}
\caption{Simulated small sample ($n=20$) from the IOR distribution HS(0.6). The plot on the left shows the CDFs $F$ (dot-dashed), $\F_n$ (solid) and $\widetilde{\F}_n$ (dashed); the plot on the right shows $\lambda_F$ (dot-dashed) and $\widetilde{\lambda}_n$ (dashed).\label{f1-1}}
\end{figure}

In Figure~\ref{f1} we plot simulated MSE values at all percentiles, based on 1000 runs, standardised by the corresponding ones for $\F_n${, that is, $\frac{\text{MSE}(\widetilde{\F}_n(x))}{\text{MSE}(\F_n(x))}$}. The simulation results show that, when $F$ is IOR, $\widetilde{\F}_n$ often outperforms $\F_n${, especially for extremely small deciles}. This is {particularly visible for} small sample sizes, as clearly when $n$ grows both estimators converge to $F$.
According to our results, the largest improvements often occur in the left tails, around deciles 0.1--0.3, however, in some cases $\widetilde{\F}_n$ performs worse at very small percentiles (0.01--0.05) for very small sample sizes ($n=10$) and some particular models, as it can be seen in Figure~\ref{f1}. Note that, among the distributions considered, only the W and the HS distributions have all moments, whereas the LL(1) has an infinite mean, the LL(2) and the B2 with $b\leq 2$ have infinite variance{,} and the B2 with $b\leq 3$ has an infinite third moment. This represents one of the main advantages of the proposed method, as the majority of the shape conditions which are commonly used in order-restricted inference under an ``adverse ageing scenario'' (log-concavity, IHR, IHRA, DMRL) require the existence of all moments. {Similarly, the HS($a$) is bathtub for the values of $a$ considered, so it is not compatible with the log-concavity, IHR, and IHRA assumptions (only the W($a$), for $a\geq1$, satisfies all these latter properties).} It can also be noted that convergence is somewhat slower for distributions with infinite variance. The case of the {LL(1)} is especially critical, because, besides having an infinite mean, more importantly, it has a constant OR, so we expect no improvement, or less, compared to the cases in which the OR is strictly increasing. {Surprisingly, even in this case, $\widetilde{\F}_n$ performs slightly better than $\F_n$ for smaller sample sizes, whereas, in general, $\widetilde{\F}_n$ performs better at the tails and $\F_n$ performs better around the median.}

\begin{figure}
\centering
\includegraphics[scale=0.8]{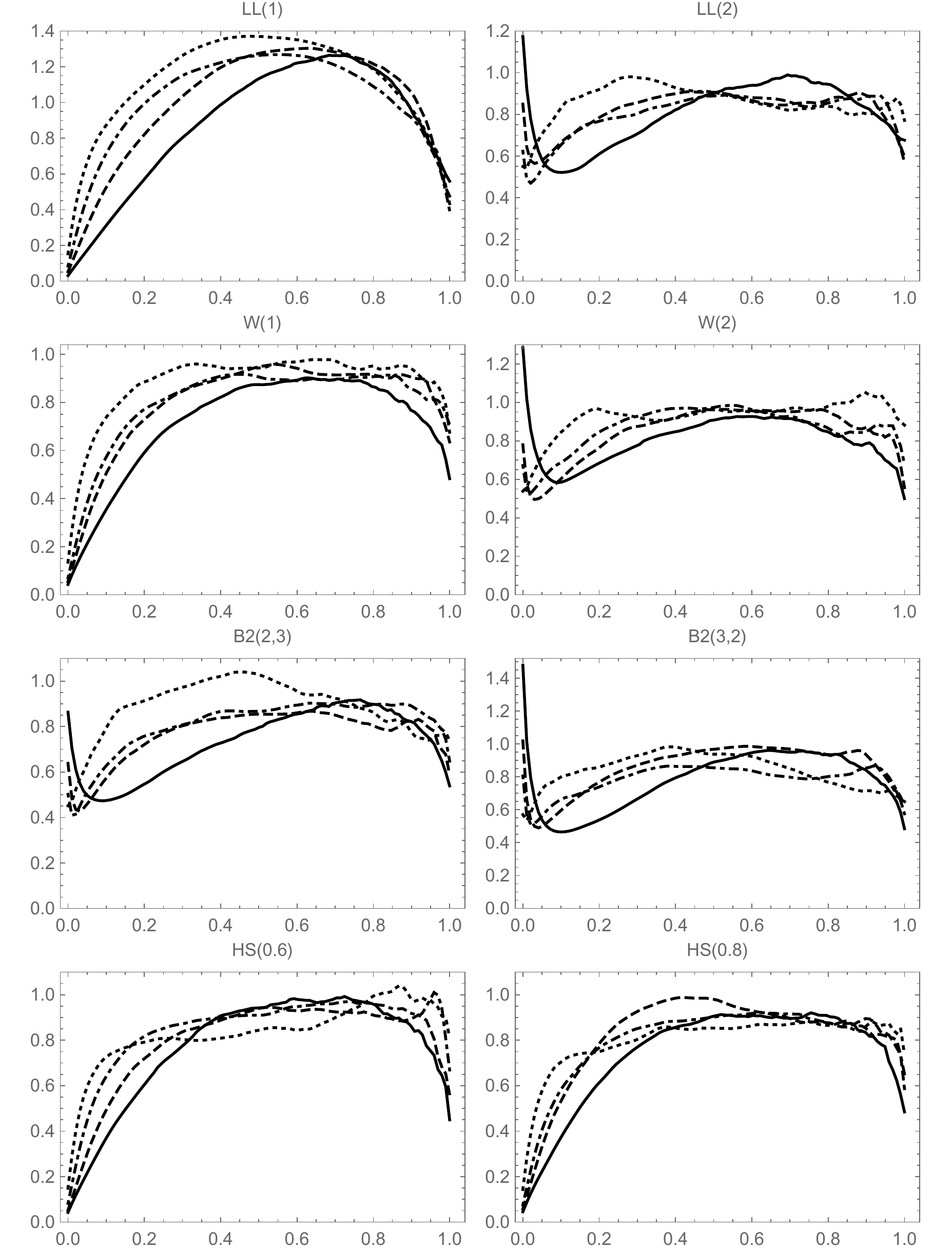}
\caption{{Standardised} MSE for $n=10$ (solid), $n=30$ (dashed), $n=50$ (dot-dashed), $n=100$ (dotted), evaluated at all percentiles.\label{f1}}
\end{figure}


\subsection{Smoothed estimators}
One may wonder whether the improvement of $\widetilde{\F}_n$ over ${\F}_n$ depends on the fact that $\widetilde{\F}_n$ is a smooth function. In response to this, {a numerical comparison of} $\widetilde{\F}_n$ with kernel estimators of the CDF, {may show} that the MSE of $\widetilde{\F}_n$ is often smaller even in this case, suggesting that the improvement of $\widetilde{\F}_n$ over ${\F}_n$ depends mostly on shape aspects, rather than on smoothness. Although the objective of this paper is to estimate the CDF, we will now briefly discuss the possibility of obtaining smooth estimators even for the PDF.
In fact, while $\widetilde{\F}_n$ is absolutely continuous in $[0,X_{(n)})$, the corresponding estimator of the PDF $\widetilde{f}_n$, {defined by (\ref{eq:PDF_est}),} is typically discontinuous and may exhibit spikes at the observed points (see Figure \ref{smooth}){. Hence,} it is generally not a good pointwise estimator of $f$. For this reason, it may be interesting to obtain a smooth estimator of the PDF which preserves the OR properties of $\widetilde{f}_n$, which may be achieved by smoothing the step function $\widetilde{\lambda}_n$ and then applying the approach discussed above.

Let {$k$} be {a zero-mean} PDF with support $[-1,1]$, for technical convenience. Given a bandwidth $h>0$ define as usual $k_h(x)=\frac1hk(\frac xh)$, and denote by $K_h$ the corresponding CDF{, that is, $K_h(x)=\int_{-\infty}^x k_h(t)\,dt$}. A smooth version of $\widetilde{\lambda}_n$ is given by
$$
\lambda_{n,h}^s(x)=\int k_{h}(x-t)\widetilde{\lambda}_n(t)\,dt=\int (1-K_{h}(x-t))\,d\widetilde{\lambda}_n(t){,}
$$
{after integration} by parts. Therefore, as $\widetilde{\lambda}_n$ is increasing by construction, it follows immediately that $\lambda^s_{n,h}$ is also increasing. Note that, since $\widetilde{\lambda}_n(x)=+\infty$, for $x\geq X_{(n)}$, then $\lambda_{n,h}^s(x)=+\infty$, for $x\geq X_{(n)}-h$, hence the smoothing has an effect just at points smaller than $X_{(n)}-h$. Now, a convex estimator of $\Lambda$ {is} given by $\Lambda^s_{n,h}=\int_{-\infty}^x\lambda_{n,h}^s(t)\,dt$, {where, unlike (\ref{ORestimator}), here the integration starts at $-\infty$, meaning that it assigns positive mass to the negative half line (this may be adjusted, if needed). Correspondingly,} an IOR estimator of the CDF {is defined as} $\F^s_{n,h}=L\circ \Lambda_{n,h}^s$, and a smooth estimator of the PDF {is given by} $f^s_{n,h}=\frac{\lambda_{n,h}^s}{(1+\Lambda_n^s)^2}$.

{Since $k$ has support $[-1,1]$, $\lambda_{n,h}^s(x)=\int_{x-h}^{x+h} k_{h}(x-t)\widetilde{\lambda}_n(t)\,dt$.
Therefore, by monotonicity, $\widetilde{\lambda}_n(t)\in[\widetilde{\lambda}_n(x-h),\widetilde{\lambda}_n(x+h)]${, which holds} for $t\in[x-h,x+h]$, implies that $\lambda_{n,h}^s(x)\in [\widetilde{\lambda}_n(x-h),\widetilde{\lambda}_n(x+h)]$. Then, if $h\longrightarrow 0$ as $n\longrightarrow\infty$, the asymptotic behaviour of $\widetilde{\lambda}_n${, proved in Theorem~\ref{estimators},} entails that $\lambda_{n,h}^s\longrightarrow \lambda$ strongly and uniformly in $[0,x_0]$ for every $x_0$ such that $\lambda(x_0)<\infty$. So, {using} the same arguments as in the proof of Theorem \ref{estimators}, $\F_n^s\longrightarrow F$ uniformly in $[0,\infty)$ and $f_n^s\longrightarrow f$ strongly and uniformly in $[0,x_0]$.

As usual in kernel estimation problems, the optimal choice of the bandwidth is the crucial issue. We are not dealing with this problem in the present paper, as the involved construction of $\widetilde{\lambda}_n$ makes it very hard to derive expressions for the MSE of $\lambda_{n,h}^s$. This estimator is also quite demanding from a computational point of view. As an illustration, Figure \ref{smooth} shows $\lambda_{n,h}^s$ and $f_{n,h}^s$ for a sample of size $n=20$ from a B2(2,3) distribution, using the Epanechnikov kernel with bandwidth $h=\frac 14$, that is, $k_{1/4}(x)=\frac{3}{4} \left(1-16 x^2\right)$, $x\in[-\frac 14,\frac 14]$.

\begin{figure}
\centering
\includegraphics[scale=0.8]{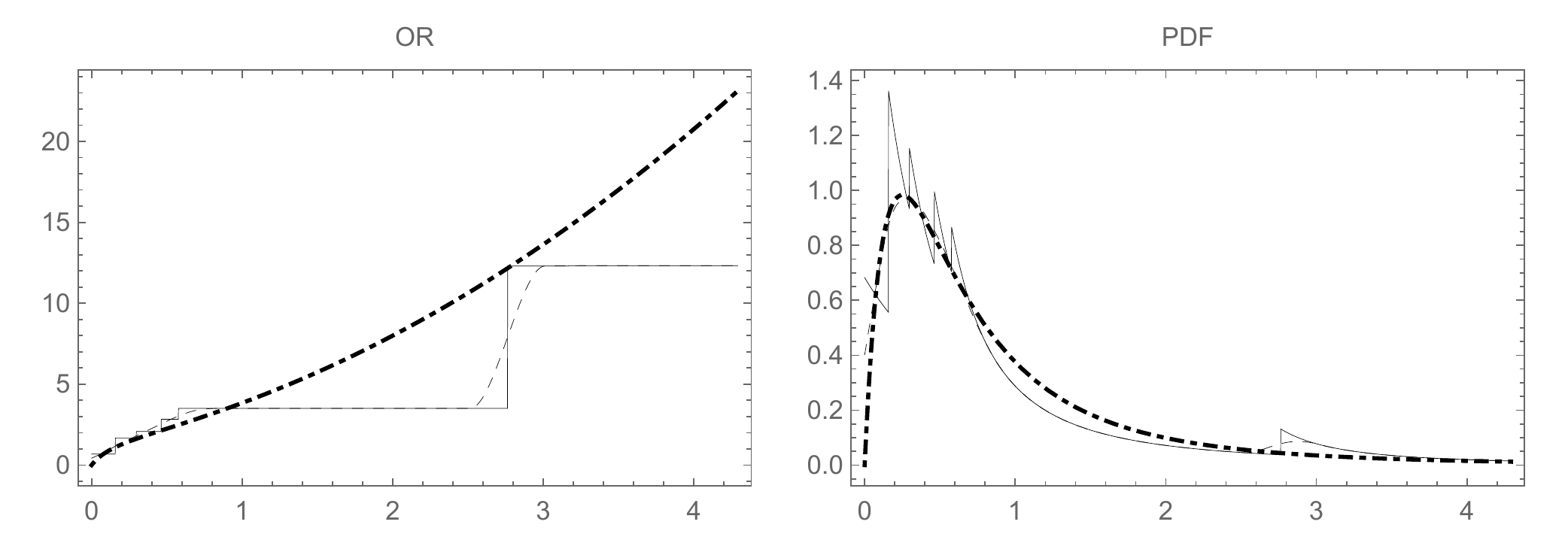}
\caption{Simulated small sample ($n=20$) from the IOR distribution B2(2,3). The plot on the left shows the ORs $\lambda_F$ (dot-dashed) $\widetilde{\lambda}_n$ (solid) and $\lambda_{n,h}^s$ (dashed), for $h=\frac14$; the plot on the right shows $f$ (dot-dashed), $\widetilde{f}_n$ (solid) and $f_{n,h}^s$ (dashed).\label{smooth}}
\end{figure}

\section{Tests for the IOR null hypothesis}
Tests of different ageing properties{, or, as discussed above, shape restrictions,} have been discussed extensively in the literature{: among many other authors,} \cite{proschan1967}, \cite{barlow1969}, \cite{bickel1969} or \cite{bickeldoksum} consider the null hypothesis of exponentiality versus the IHR alternative; \cite{deshpande}, \cite{kochar1985testing}, \cite{link}, \cite{wells}, and \cite{ahmad}, consider the null hypothesis of exponentiality versus the IHRA alternative; while \cite{tenga1984}, \cite{hall2005}, \cite{durot}, \cite{groeneboom2012}, \cite{gibels} or \cite{landoihr} consider the IHR null hypothesis versus the alternative that the HR is non-monotone.
This approach, {that is, setting the shape restriction as the null hypothesis}, may be more demanding from a computational point of view, but it has several advantages in controlling the behaviour of the test. We follow this idea and study tests of the null hypothesis $\mathcal{H}_0:F\in\mathcal{F}_{IOR}$ versus ${\mathcal{H}_1}:F\notin\mathcal{F}_{IOR}$. Tests of this kind have already been studied in \cite{oddspaper}, who focused on a restricted null hypothesis $\mathcal{H}_0^\nu:\text{``}\lambda\text{ is increasing in }S_\nu\text{''}$, where $S_\nu=\{x:x\leq F^{-1}(1-\nu)\}$, $\nu\in(0,1)$. {The domain restriction was {necessary} to control the {behaviour of the test statistic, given by the maximum }distance between the empirical odds function $\Lambda_{\F_n}=\frac{\F_n}{1-\F_n}$ and its GCM {$(\Lambda_{\F_n})_{cx}$}, due to the lack of uniform strong convergence of $\Lambda_{\F_n}$ {and $(\Lambda_{\F_n})_{cx}$} to $\Lambda_F$} {under $\mathcal{H}_0$}. {Differently, transporting the testing problem to the maximum distance between the estimators $\overline{T}_n$, or ${\F}_n$, {and their corresponding constrained versions $(\overline{T}_n)_{cx}$, or $\widetilde{\F}_n$, respectively, }means {that, under $\mathcal{H}_0$, }we do have uniform strong convergence in the whole domain, hence {restrictions are no} longer necessary}.

\subsection{A test based on the convexity of  $\overline{T}$}
Similarly to \cite{tenga1984}, who deal with the IHR property, we may obtain a first test which detects departures from the IOR property by considering a suitable distance between $\overline{T}_n$ and $(\overline{T}_n)_{cx}$, or, similarly, $\overline{T}^{-1}_n$ and $(\overline{T}_n^{-1})_{cv}$. We will consider the following {scale-independent} test statistic
\begin{equation*}
\KT(\F_n)=\sup_{u\in[0,1] } \vert \overline{T}_n(u)-(\overline{T}_n)_{cx}(u) \vert=\max_{1\leq i\leq n } \vert \frac in-(\overline{T}_n)_{cx}\circ \overline{T}_n^{-1}(\tfrac in) \vert.
\end{equation*}
{It} can be seen that, if the jump points of the empirical odds function $\Lambda_{\F_n}$, determined by the coordinates $(X_{(i)},\frac i{n-1})$, $i=1,\ldots,n-1$, lay on a convex curve, then $\KT(\F_n)=0$.

To obtain a conservative test, the determination of the least favourable distribution of $\KT$ under the null hypothesis is especially critical. As established in the following theorem, owing to a stochastic ordering result, such a distribution coincides with the distribution of $\KT$ in the case when the data are randomly sampled from the LL(1). To understand this behaviour, we need to introduce some additional notations.
Let us denote $\overline{Z}_F=\frac{Z_F}{T^{-1}_F(1)}$, so that $\overline{T}_F=F\circ \overline{Z}_F^{-1}$, and define the functional$$\Phi_p(F)=(F\circ \overline{Z}_F^{-1})_{cx}\circ \overline{T}_F^{-1}(p)=(\overline{T}_F)_{cx}\circ \overline{T}_F^{-1}(p).$$
The empirical counterpart of $\Phi_p(F)$, for $p=\frac in$, is $$\Phi_{i/n}(\F_n)=(\F_n\circ \overline{Z}_{\F_n}^{-1})_{cx}\circ \overline{T}_{\F_n}^{-1}(\tfrac in)=(\overline{T}_{\F_n})_{cx}\circ \overline{T}_{\F_n}^{-1}(\tfrac in).$$
Let us recall that $X$ is larger than $Y$ in the usual stochastic order, denoted as $X\geq_{st} Y$, if $P(X\geq t)\geq P(Y\geq t)$, for every $t$ \citep{shaked2007}. We can now establish the following result.
\begin{theorem}\label{order}
Let $\mathbb{L}_n$ be the empirical CDF corresponding to a random sample from $L$. Under $\mathcal{H}_0$, $\KT(\mathbb{L}_n)\geq_{st} \KT(\F_n)$.
\end{theorem}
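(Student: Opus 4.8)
The plan is to prove the stronger, almost sure statement $\KT(\F_n)\le \KT(\mathbb{L}_n)$ under a coupling of the two samples; since almost sure domination of nonnegative random variables implies the usual stochastic order, this gives the theorem. I would build the coupling from a single vector of uniform order statistics $U_{(1)}\le\cdots\le U_{(n)}$, setting $Y_{(i)}=L^{-1}(U_{(i)})$ and $X_{(i)}=F^{-1}(U_{(i)})$, so that $\mathbb{L}_n$ and $\F_n$ are the empirical CDFs of the $Y$'s and the $X$'s. Under $\mathcal{H}_0$ the odds function $\psi=\Lambda_F=L^{-1}\circ F$ is convex and increasing, hence its inverse $\psi^{-1}=F^{-1}\circ L$ is concave and increasing and $X_{(i)}=\psi^{-1}(Y_{(i)})$. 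The whole problem thus reduces to a deterministic monotonicity statement: \emph{applying a concave increasing map to the ordered observations cannot increase $\KT$.}

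To exploit this, I would rewrite $\KT$ through the slopes of the broken line $T_n^{-1}$. With $d_i=Y_{(i)}-Y_{(i-1)}$ and $c_i=\big((n-i+1)/n\big)^2$, the increments of $T_{\mathbb{L}_n}^{-1}$ are $c_i d_i$, while concavity of $\psi^{-1}$ makes the secant slopes $s_i=\big(\psi^{-1}(Y_{(i)})-\psi^{-1}(Y_{(i-1)})\big)/d_i$ decreasing in $i$, so the increments of $T_{\F_n}^{-1}$ are $c_i s_i d_i$. Because $\KT$ is scale invariant, only the slope sequence of $\overline{T}_n^{-1}$ up to a common positive factor enters, and passing from $\mathbb{L}_n$ to $\F_n$ multiplies this slope sequence by the decreasing sequence $(s_i)$ and renormalizes. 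Using the inverse duality $(\overline{T}_n)_{cx}=\big((\overline{T}_n^{-1})_{cv}\big)^{-1}$ between the GCM of $\overline{T}_n$ and the LCM of $\overline{T}_n^{-1}$, the statement to be proven becomes: multiplying the slopes of the increasing broken line $\overline{T}_n^{-1}$ by a decreasing positive sequence brings it closer, in the relevant sup-norm, to its least concave majorant.

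The intuition behind this deterministic lemma is that a decreasing reweighting tilts the slope sequence towards being monotone decreasing, i.e. towards concavity: a violation $m_{i+1}>m_i$ survives only when $m_{i+1}/m_i>s_i/s_{i+1}\ge 1$, so every violation is weakened, and the least concave majorant can only descend towards the curve. I expect the conversion of this slope comparison into the required sup-norm inequality to be the main obstacle, because the majorant is a global convex-hull operation, the two broken lines sit on different vertical scales, and the normalization by $T_n^{-1}(1)$ together with the reflection linking the GCM of $\overline{T}_n$ to the LCM of $\overline{T}_n^{-1}$ must be tracked exactly. I would handle it by describing the contact set of the majorant through a pool-adjacent-violators argument and showing that each excursion of $\overline{T}_n^{-1}$ below its majorant has smaller height after the decreasing reweighting, since the chord slopes that define the envelope are themselves reweighted in the same monotone fashion; Marshall's inequality, already used in the proof of Theorem~\ref{estimators}, should then control the residual comparison and transfer the bound back to $\KT(\F_n)$ and $\KT(\mathbb{L}_n)$.
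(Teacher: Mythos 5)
Your coupling is the right one and matches the paper's: under $\mathcal{H}_0$ the map $\tau=F^{-1}\circ L$ is increasing and concave, the sample from $F$ is realized as $\tau$ applied to a sample from $L$, and the theorem reduces to a purely deterministic inequality between the two realizations, with the stochastic order following from almost-sure domination under the coupling. Up to that point you have reproduced the skeleton of the paper's argument. The problem is that the deterministic inequality is the entire content of the proof, and you leave it unproved: you yourself flag "the conversion of this slope comparison into the required sup-norm inequality" as the main obstacle, and what you offer in its place is an intuition ("every violation is weakened, the majorant can only descend") plus a plan to run a pool-adjacent-violators analysis of excursions. That intuition is not a proof; the least concave majorant is a global operation, and a pointwise weakening of local convexity violations does not by itself control the sup-distance, because the curve you are comparing against also moves.

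There are two more concrete difficulties with the route you sketch. First, your reformulation passes to $\overline{T}_n^{-1}$ and its LCM, but $\KT$ is a \emph{vertical} distance between $\overline{T}_n$ and $(\overline{T}_n)_{cx}$, i.e.\ $\max_i\bigl(\tfrac in-(\overline{T}_n)_{cx}\circ\overline{T}_n^{-1}(\tfrac in)\bigr)$; under the inverse duality this becomes a \emph{horizontal} distance for $\overline{T}_n^{-1}$, not the height of an excursion below its majorant. You acknowledge the scales must be "tracked exactly" but never resolve this mismatch, and the statement you propose to prove ("reweighting the slopes by a decreasing sequence brings the broken line closer in sup-norm to its LCM") is not obviously equivalent to what is needed. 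Second, Marshall's inequality compares the majorants of two functions to the functions themselves on a \emph{common} domain; here the two broken lines have different knot abscissas ($u_i$ versus $u_i^*$), so it is not the right tool for the residual comparison. The paper closes exactly this gap differently: it keeps the vertical formulation, defines the piecewise linear reparametrization $h(u_i)=u_i^*$, shows $h$ is concave because the secant slopes of $\tau$ are decreasing (the same observation as your $s_i$), and then invokes Theorem~2.2 of \cite{tenga1984}, which states precisely that a concave increasing change of the abscissa increases the GCM evaluated at the corresponding knots, yielding $\Phi_{i/n}(F_n^*)\geq\Phi_{i/n}(L_n)$ for all $i$ simultaneously and hence the domination of the maximum. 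To complete your proof you would either need to prove that lemma yourself or restructure your slope argument so that it bounds the vertical gaps $\tfrac in-(\overline{T}_n)_{cx}(u_i)$ directly.
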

\begin{proof}
Let $L_n$ be a realization of $\mathbb{L}_n$, corresponding to the observed sample $(y_{(1)},\ldots,y_{(n)})$. Since ${L}\geq_c F$, Theorem 1 of \cite{sjs} implies that $\Phi_{i/n}(\F_n)\geq_{st} \Phi_{i/n}(\mathbb{L}_n)$, for $i=1,\ldots,n$, provided that
\begin{equation}\label{functional}
(\overline{T}_{F^*_n})_{cx}\circ \overline{T}_{F^*_n}^{-1}(\tfrac in)=\Phi_{i/n}(F_n^*)\geq \Phi_{i/n}({L}_n)=(\overline{T}_{L_n})_{cx}\circ \overline{T}_{L_n}^{-1}(\tfrac in),
\end{equation}
where $F_n^*=F^{-1}\circ L\circ L_n=\tau\circ L_n$ is the empirical CDF corresponding to the values $\tau(y_{(i)})$, which determine an ordered sample from $F$.
Let
\begin{equation*}
u_i\!=\!\overline{T}^{-1}_{L_n}(\tfrac in)\!=\!\frac{\sum_{j=1}^i(\frac{n-j+1}n)^2(y_{(j)}-y_{(j-1)})}{\sum_{j=1}^n(\frac{n-j+1}n)^2(y_{(j)}-y_{(j-1)})}\;\mbox{and}\;u^*_i\!=\!\overline{T}_{F^*_n}^{-1}(\tfrac in)\!=\!\frac{\sum_{j=1}^i(\frac{n-j+1}n)^2(\tau(y_{(j)})-\tau(y_{(j-1)}))}{\sum_{j=1}^n(\frac{n-j+1}n)^2(\tau(y_{(j)})-\tau(y_{(j-1)}))}.
\end{equation*}
The denominators $\theta=\sum_{j=1}^n(\frac{n-j+1}n)^2(y_{(j)}-y_{(j-1)})$ and $\eta=\sum_{j=1}^n(\frac{n-j+1}n)^2(\tau(y_{(j)})-\tau(y_{(j-1)}))$ are constant because the samples are fixed. {We now define} the function $h:[0,1]\rightarrow[0,1]$ by
\begin{equation*}h(u_i)=h\left(\frac1{\theta        }{\sum_{j=1}^i(\frac{n-j+1}n)^2(y_{(j)}-y_{(j-1)})}\right)=\frac1{\eta}{\sum_{j=1}^i(\frac{n-j+1}n)^2(\tau(y_{(j)})-\tau(y_{(j-1)}))}=u^*_i,\end{equation*}
and by linear interpolation between the $u_i$'s values. The increments may be expressed as $u_i-u_{i-1}=\frac 1\theta (\frac{n-j+1}n)^2 (y_{(i)}-y_{(i-1)})$ and $u^*_i-u^*_{i-1}=\frac 1\eta(\frac{n-j+1}n)^2(\tau(y_{(i)})-\tau(y_{(i-1)}))$, respectively. To prove (\ref{functional}), note that, under $\mathcal{H}_0$, the function $\tau$ is increasing and concave, hence
\begin{equation*}\frac{h(u_{i+1})-h(u_{i})}{u_{i+1}-u_{i}}=\frac\theta\eta\frac{\tau(y_{i+1})-\tau(y_{i})}{y_{i+1}-y_{i}}
\leq\frac\theta\eta\frac{\tau(y_{i})-\tau(y_{i-1})}{y_{i}-y_{i-1}}=\frac{h(u_{i})-h(u_{i-1})}{u_{i}-u_{i-1}}.\end{equation*}
Therefore the slopes of the piecewise linear function $h$ are decreasing, that is, $h$ is concave. Now, Theorem 2.2 of \cite{tenga1984} yields (\ref{functional}). Subsequently, as $\Phi_{i/n}(\F_n)\geq_{st} \Phi_{i/n}(\mathbb{L}_n)$, $\KT(\F_n)=\max_i (\frac in-\Phi_{i/n}(\F_n))$ and similarly $\KT(\mathbb{L}_n)=\max_i (\frac in-\Phi_{i/n}(\mathbb{L}_n))$, we obtain the desired result.
\end{proof}

Theorem \ref{order} states that the random variable $\KT(\mathbb{L}_n)$ represents a stochastic (upper) bound for $\KT(\F_n)$ under the null hypothesis. Accordingly, the problem boils down to testing $\mathcal{H}_0^{L}:$ ``$F$ is LL(1)'' against $\mathcal{H}_1$, so we reject $\mathcal{H}_0$ when $\KT(F_n) \geq c_{\alpha,n}$, where $c_{\alpha,n}$ is the solution of $P(\KT(\mathbb{L}_n)\geq c_{\alpha,n}) = \alpha$. Theorem \ref{order} ensures that the probability of rejecting $\mathcal{H}_0$ when it is true, is at most $\alpha$, that is, the test has size $\alpha$ under $\mathcal{H}_0$. {Similarly, if $F$ is DOR, it is easy to see that $P(\KT(\F_n)\geq c_{\alpha,n})\geq P(\KT(\mathbb{L}_n)\geq c_{\alpha,n})=\alpha$, that is, the test is \textit{unbiased} under DOR alternatives.} For a given realisation $F_n$, the $p$-value of the test is $p=P(\KT(\mathbb{L}_n)\geq \KT(F_n))$.

As established by the following theorem, $\KT$ is also capable of detecting any deviation from the IOR null hypothesis.
\begin{theorem}\label{consistency}
Under $\mathcal{H}_1$, $\lim_{n\rightarrow \infty}P(\KT(\F_n)>c_{\alpha,n})=1$.
\end{theorem}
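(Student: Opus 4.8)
The plan is to show that under $\mathcal{H}_1$ the test statistic converges almost surely to a strictly positive constant, while the critical value collapses to zero. First I would introduce the population counterpart of the statistic,
$$\KT(F)=\sup_{u\in[0,1]}|\overline{T}_F(u)-(\overline{T}_F)_{cx}(u)|,$$
and observe that under $\mathcal{H}_1$ the function $\overline{T}_F$ fails to be convex: by the equivalence established around (\ref{oddsttt}), $F\notin\mathcal{F}_{IOR}$ means precisely that $\overline{T}_F$ is not convex, so its greatest convex minorant lies strictly below it on a set of positive length. Hence $\KT(F)=:\delta>0$.

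Next I would prove that $\KT(\F_n)\to\KT(F)=\delta$ almost surely. By the Lemma, $\overline{T}_n\to\overline{T}_F$ strongly and uniformly on $[0,1]$. The greatest-convex-minorant operator is non-expansive for the supremum norm (the stability form of Marshall's inequality already invoked in the proof of Theorem~\ref{estimators}), so that $\sup_u|(\overline{T}_n)_{cx}-(\overline{T}_F)_{cx}|\le\sup_u|\overline{T}_n-\overline{T}_F|\to0$ almost surely. A triangle-inequality estimate then gives
$$|\KT(\F_n)-\KT(F)|\le\sup_{u}|\overline{T}_n(u)-\overline{T}_F(u)|+\sup_{u}|(\overline{T}_n)_{cx}(u)-(\overline{T}_F)_{cx}(u)|\longrightarrow0,$$
so indeed $\KT(\F_n)\to\delta$ almost surely, hence in probability.

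Then I would show that $c_{\alpha,n}\to0$. Since $L$ is the boundary LL(1) case, $\overline{T}_L$ is the identity, which is (trivially) convex, so $\KT(L)=0$; applying the argument of the previous paragraph to a sample drawn from $L$ yields $\KT(\mathbb{L}_n)\to0$ almost surely, and therefore in probability. Consequently, for any $\varepsilon>0$ we have $P(\KT(\mathbb{L}_n)\ge\varepsilon)\to0$, so this probability is eventually strictly below $\alpha$; by monotonicity of the survival function and the defining relation $P(\KT(\mathbb{L}_n)\ge c_{\alpha,n})=\alpha$, this forces $c_{\alpha,n}<\varepsilon$ for all large $n$, that is, $c_{\alpha,n}\to0$.

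Finally I would combine the two limits: choose $n$ large enough that $c_{\alpha,n}<\delta/2$; since $\KT(\F_n)\to\delta$ in probability, $P(\KT(\F_n)>\delta/2)\to1$, and on that event $\KT(\F_n)>c_{\alpha,n}$, giving $P(\KT(\F_n)>c_{\alpha,n})\to1$. The step I expect to be the main obstacle is the convergence $c_{\alpha,n}\to0$: one must pass carefully from the almost-sure degeneracy of $\KT(\mathbb{L}_n)$ to the behaviour of its quantiles, and confirm that the least-favourable LL(1) law produces a population distance of exactly zero, which is precisely what makes the critical value vanish and lets a fixed positive $\delta$ dominate it asymptotically.
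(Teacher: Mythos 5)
Your proposal is correct and follows essentially the same route as the paper's proof: both arguments show that $c_{\alpha,n}\to 0$ because $\KT(\mathbb{L}_n)$ degenerates to $0$ (since $\overline{T}_L$ is the identity), and that under $\mathcal{H}_1$ the statistic converges to the positive population distance $\sup_u(\overline{T}(u)-\overline{T}_{cx}(u))$ via uniform convergence of $\overline{T}_n$ and $(\overline{T}_n)_{cx}$. The only cosmetic difference is that you justify the convergence of $(\overline{T}_n)_{cx}$ to $(\overline{T}_F)_{cx}$ by the sup-norm non-expansiveness of the GCM operator, whereas the paper defers this step to an argument from an external reference; your version is, if anything, slightly more self-contained.
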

\begin{proof}
If $\mathcal{H}_0$ is true, considering the special case when $\overline{T}_n$ is obtained by sampling from the LL(1), $\overline{T}_n$ and $(\overline{T}_n)_{cx}$ converge strongly and uniformly to the identity function. In particular, Marshall's inequality gives $\sup_{u\in[0,1]}|(\overline{T}_n)_{cx}(u)-u|\leq \sup_{u\in[0,1]}|{\overline{T}}_n(u)-u|,$ with probability 1. Then, for every fixed $\alpha\in(0,1)$, there exists some $n_0$ such that, for $n>n_0$
\begin{equation*}
P(\sup_u|\overline{T}_n(u)- u|\leq \frac\epsilon2\land\sup_u|(\overline{T}_n)_{cx}(u)- u|\leq \frac\epsilon2)\geq 1-\alpha
\end{equation*}
However, the function $u-\frac\epsilon2$ is convex, therefore, for $n>n_0,$ $u-\frac\epsilon2\leq(\overline{T}_n)_{cx}(u)\leq \overline{T}_n(u)\leq u+\frac\epsilon2,\forall u$, and by inclusion we obtain $$P(-\epsilon\leq \overline{T}_n(u)- (\overline{T}_n)_{cx}(u)\leq \epsilon,\forall u)=P(\sup_u( \overline{T}_n(u)- (\overline{T}_n)_{cx}(u))\leq \epsilon)\geq 1-\alpha.$$
Since $\epsilon$ can be arbitrarily small, $c_{\alpha,n}\longrightarrow0$ for $n\longrightarrow\infty$.

Suppose that $\mathcal{H}_1$ is true. Then $d=\sup_u (\overline{T}(u)-(\overline{T}_n)_{cx}(u))>0$. Moreover, $\overline{T}_n$ converges strongly and uniformly to $\overline{T}$, whereas $(\overline{T}_n)_{cx}$ converges strongly and uniformly to $\overline{T}_{cx}$ (this can be seen using the same argument as in Theorem 3 of \cite{landoihr}). Therefore, given some $\epsilon>0$, there exists some $n_0$ such that, for $n>n_0$, $\sup_u|\overline{T}(u)-\overline{T}_n(u)|<\frac\epsilon2$ and $\sup |(\overline{T}_n)_{cx}(u)-\overline{T}_{cx}(u)|<\frac\epsilon2$, with probability 1. Then, for $n>n_0$\begin{equation*}
\overline{T}_n(u)-(\overline{T}_n)_{cx}(u)> \overline{T}(u)-\frac\epsilon2-(\overline{T}_{cx}(u)+\frac\epsilon2)=\overline{T}(u)-\overline{T}_{cx}(u)-\epsilon
\end{equation*}
almost surely, for every $u$, which implies
\begin{equation*}
\sup_u(\overline{T}_n(u)-(\overline{T}_n)_{cx}(u))> \sup_u( \overline{T}(u)-\overline{T}_{cx}(u)-\epsilon)=d-\epsilon>0.
\end{equation*}
Therefore, since $\epsilon$ can be arbitrarily small, $P(\sup_u(\overline{T}_n(u)-(\overline{T}_n)_{cx}(u))\geq d)\longrightarrow 1$, for $n\longrightarrow\infty$. But since $c_{\alpha,n}\rightarrow0$, then $P(\KT(\F_n)>c_{\alpha,n})\longrightarrow1$.
\end{proof}
It is important to remark that one may replace the sup-norm with a different type of distance between $(\overline{T}_n)_{cx}$ and $\overline{T}_n$ and still obtain a consistent test, since the key steps in the proof of Theorem \ref{consistency} are the uniform consistency of these two estimators under $\mathcal{H}_0$, while only $\overline{T}_n$ converges to $\overline{T}$ under $\mathcal{H}_1$.

\subsection{A test based on $\widetilde{\F}_n$}
A second test can be obtained by considering a suitable distance between $\F_n$ and $\widetilde{\F}_n$, such as the uniform norm. Let us define the following Kolmogorov-Smirnov type test statistic
\begin{equation*}
\KS({\F}_n)=\sup_{x>0} \vert \F_n(x)-\widetilde{\F}_n(x) \vert.
\end{equation*}

About the IHR property, a Kolmogorov-Smirnov test of this type has been studied by \cite{landoihr}. Note that $\KS$ is also scale-independent. However, if the jump points of $\Lambda_{\F_n}$ lay on a convex curve, then $\KS(\F_n)$ does not coincide with 0, differently from $\KT(\F_n)$. To fulfil this property, one should consider a modified version of $\KS$, say $\widehat{\KS}(\F_n)=\sup_{x>0} \vert \widehat{\F}_n(x)-\widetilde{\F}_n(x) \vert$, where $\widehat{\F}_n$ has the same construction of $\widetilde{\F}_n$, as described in Section 3, with the exception that it does not include any shape constraint, namely, $$\widehat{\F}_n(x)=L\left(\int_0^x\frac 1{\partial_+(T_n^{-1})\circ \F_n(t)}dt\right).$$
$\widehat{\F}_n$ and $\F_n$ are very similar and it is easy to see that they are asymptotically equivalent, moreover, the computation of $\widehat{\F}_n$ is quite demanding without having an improvement on the power of the test. Therefore, henceforth we will focus only on $\KS$.

As in the previous subsection, the distribution of $\KS$ may be determined by simulating from $L$. However, in this case, the stochastic ordering arguments used in Theorem \ref{order} do not hold, because of the complicated construction of $\widetilde{\F}_n$. Accordingly, we cannot formally establish a stochastic bound for the size of the test under $\mathcal{H}_0$, so it is more correct to present this as a test for $\mathcal{H}^L_0$ against $\mathcal{H}_1$. We reject the null $\mathcal{H}^L_0$ when $\KS(F_n) \geq c_{\alpha,n}$, where $c_{\alpha,n}$ is the solution of $P(\KS(\mathbb{L}_n)\geq c_{\alpha,n}) = \alpha$. However, even in some critical IOR cases, numerical evidence, reported in the next section, shows that the type-I error of $\KS$ is always smaller than $\alpha$. Moreover, the test is consistent against non-IOR alternatives, as established by the following theorem. The proof is omitted because it can be obtained using the same arguments as in the proof of Theorem 3 of \cite{landoihr}.

\begin{theorem}\label{consistency2}
Under $\mathcal{H}_1$, $\lim_{n\rightarrow \infty}P(\KS(\F_n)>c_{\alpha,n})=1$.
\end{theorem}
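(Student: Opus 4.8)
The plan is to reproduce the two-part architecture of the proof of Theorem~\ref{consistency}, simply replacing the pair $(\overline{T}_n,(\overline{T}_n)_{cx})$ by the pair $(\F_n,\widetilde{\F}_n)$. First I would control the critical value under the benchmark. When the data are sampled from $L$, the Glivenko--Cantelli theorem gives $\sup_x|\F_n(x)-L(x)|\to 0$ almost surely; moreover $L$ is IOR (it has constant odds rate), so Theorem~\ref{estimators} applies with $F=L$ and yields $\sup_x|\widetilde{\F}_n(x)-L(x)|\to 0$ almost surely. By the triangle inequality, $\KS(\mathbb{L}_n)\le\sup_x|\F_n(x)-L(x)|+\sup_x|\widetilde{\F}_n(x)-L(x)|\to 0$ almost surely. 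Since $\KS(\mathbb{L}_n)\ge 0$ converges to $0$ in probability, its upper $\alpha$-quantile satisfies $c_{\alpha,n}\to 0$, exactly as in the first half of Theorem~\ref{consistency}.

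Next I would show that $\KS(\F_n)$ stays bounded away from zero under $\mathcal{H}_1$. The empirical CDF converges uniformly to $F$ almost surely. For the constrained estimator $\widetilde{\F}_n=L\circ\widetilde{\Lambda}_n$, I would trace the convergence through its construction: by the strong uniform consistency of $T^{-1}_n$ established above, $T_n^{-1}\to T_F^{-1}$ uniformly, and Marshall's inequality gives $(T_n^{-1})_{cv}\to(T_F^{-1})_{cv}$ uniformly almost surely. Passing to right derivatives at the continuity points of the limit, as in Theorem~\ref{estimators}, one obtains $\widetilde{\lambda}_n\to\lambda^{*}=1/(\partial_+(T_F^{-1})_{cv}\circ F)$, hence $\widetilde{\Lambda}_n\to\Lambda^{*}=\int_0^{\cdot}\lambda^{*}$ and, using the uniform continuity and boundedness of $L$ together with tail control, $\widetilde{\F}_n\to F^{*}:=L\circ\Lambda^{*}$ uniformly almost surely. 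The limit $F^{*}$ is IOR, being a pointwise limit of the IOR distributions $\widetilde{\F}_n$, whereas $F$ is not IOR under $\mathcal{H}_1$; therefore $F^{*}\neq F$ and $d:=\sup_x|F(x)-F^{*}(x)|>0$.

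Combining the two limits finishes the argument: $\KS(\F_n)=\sup_x|\F_n(x)-\widetilde{\F}_n(x)|\to\sup_x|F(x)-F^{*}(x)|=d>0$ almost surely, while $c_{\alpha,n}\to 0$, so $P(\KS(\F_n)>c_{\alpha,n})\to 1$, mirroring the final step of Theorem~\ref{consistency}.

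The main obstacle is the second paragraph: establishing the almost sure uniform convergence of $\widetilde{\F}_n$ to the IOR limit $F^{*}$ and verifying $d>0$. Unlike the $\KT$ statistic, where one works directly with the uniformly convergent pair $\overline{T}_n$ and its GCM, here the convergence of $\widetilde{\F}_n$ must be followed through the LCM, its right derivative $\partial_+(T_n^{-1})_{cv}$, the integral $\widetilde{\Lambda}_n$, and finally the transform $L$; some care is needed at the discontinuity points of $\lambda^{*}$ and in controlling the right tail, since the supremum is taken over all of $(0,\infty)$. This is exactly the point at which the cited argument in the proof of Theorem~3 of \cite{landoihr} is invoked, which is why the authors omit the details.
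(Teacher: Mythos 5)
Your proposal is correct and follows essentially the route the paper intends: the paper omits this proof, referring to the argument of Theorem~3 of \cite{landoihr}, which has exactly your two-part structure (show $c_{\alpha,n}\to 0$ via the consistency of both $\F_n$ and $\widetilde{\F}_n$ under the LL(1) benchmark, then show the statistic stays bounded away from zero under $\mathcal{H}_1$ because $\widetilde{\F}_n$ converges to an IOR limit $F^{*}\neq F$). The only remark worth making is that your final step claims full uniform convergence $\KS(\F_n)\to d$, which is more than is needed and is the delicate part you flag; for the theorem it suffices to pick a single point $x^{*}$ with $F(x^{*})\neq F^{*}(x^{*})$ and use pointwise almost sure convergence there to get $\liminf_n \KS(\F_n)>0$.
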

Similarly to what has been discussed in the previous subsection, one may replace the sup-norm with a different type of distance between $\widetilde{\F}_n$ and ${\F}_n$, and the corresponding test would still be consistent.

\subsection{Simulations}
We compare the performance of the two tests proposed simulating from some popular parametric families of distributions. As these tests are scale invariant, we set the scale parameters to 1. Some special cases of interest are {\textit{(i)}} IOR models{, \textit{(ii)}} decreasing OR (DOR) models{, \textit{(iii)}} non-monotone OR models. As for {\textit{(i)}}, we consider the LL($a$), with shape parameter $a$ ranging in the interval $[1,1.2]$. {With regard to \textit{(ii)}}, we consider the LL($a$) with {$a\in[0.7,1)$}. These cases are especially difficult to detect because, for $a=1$, the OR is constant. As for {\textit{(iii)}}, we consider the W($a$) with shape parameter {$a\in[0.3,0.8]$}, which exhibits a decreasing-increasing OR for $a<1$; the B2 distribution, which has an increasing-decreasing OR for {$a<1$ and $b>1$}, (thus we consider $b=2$, $a\in[0.3,0.7]$); and the Birnbaum-Saunders (BS) distribution with CDF $\Phi\left(\frac1{a}\left(\sqrt x-\frac{1}{\sqrt x}\right) \right)$, $x,a>0$ ($\Phi$ denotes the standard normal CDF), which has an increasing-decreasing-increasing OR for $a\in[2,4]$ (this is especially critical to detect for smaller values of $a$). The results are reported in Figure \ref{power}, which show the rejection rates at level $\alpha=0.1$, corresponding to 500 simulation runs. The plots show that the performance of the tests is very similar when $\mathcal{H}_0$ is true (LL($a$) with $a\geq1$) and against DOR alternatives (LL($a$) with $a\leq 1$). In particular, it can be seen that, for both tests, in the IOR case the simulated type-I error probability is always bounded by $\alpha$, whereas, in the DOR case, the simulated power is always greater than $\alpha$. About $\KT$, this is formally established by Theorem \ref{order}, which represents an advantage. However, it can be seen from Figure \ref{power} that $\KS$ remarkably outperforms $\KT$ when $F$ has a non-monotone OR, which is typically the most critical case to detect. {For both tests, the simulated power increases with the sample size, confirming the consistency properties established in Theorem \ref{consistency} and Theorem \ref{consistency2}.}

\begin{figure}
\includegraphics[scale=0.8]{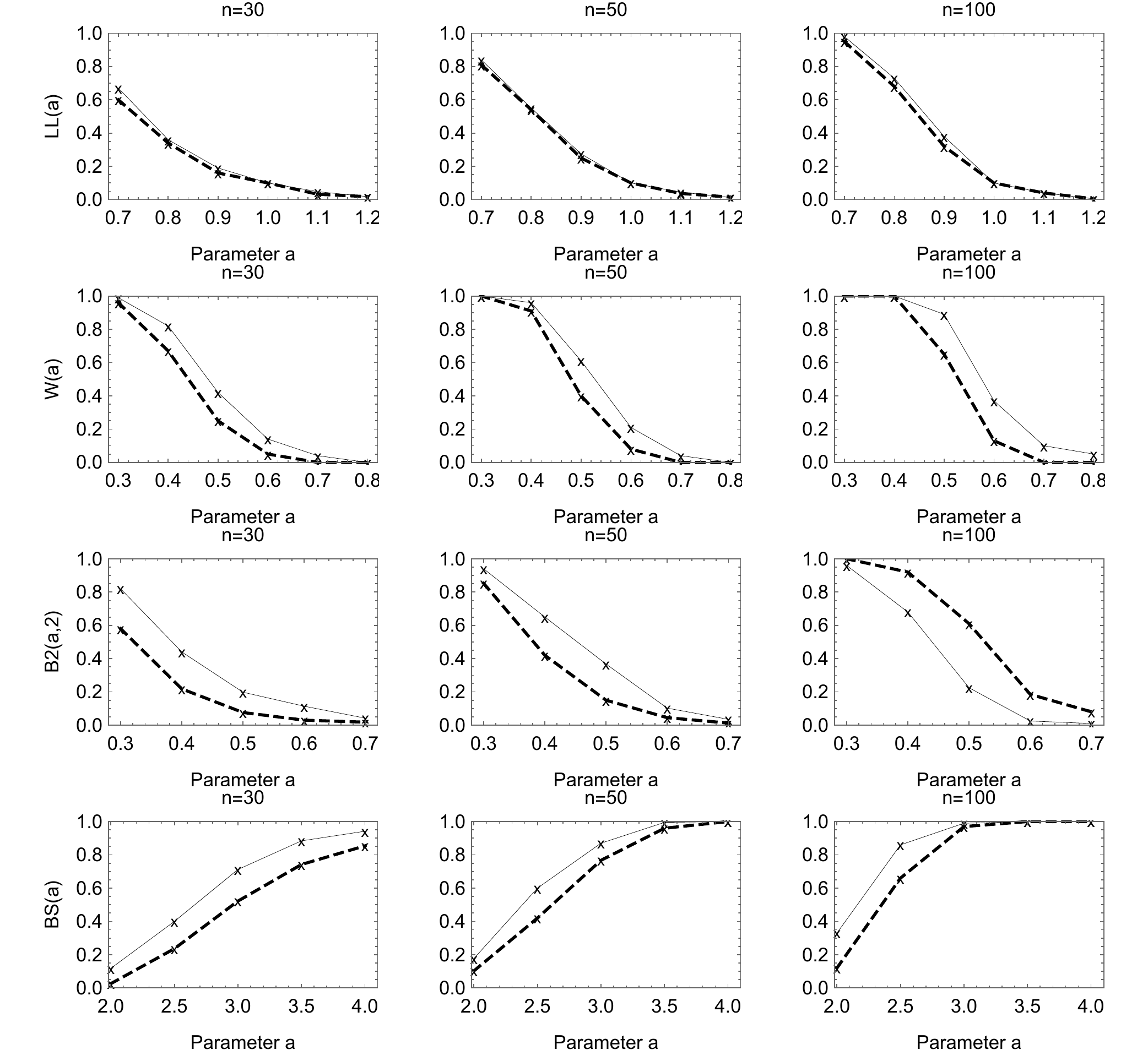}
\caption{Simulated power of $\KS$ (solid) and $\KT$ (dashed) under LL (first row), W (second row), B2 (third row), and BS (fourth row) alternatives.\label{power}}
\end{figure}

To illustrate the behaviour of $\KT$ and $\KS$ under non-IOR alternatives, we also consider the following CDF, $$F_{a,b}(x)=\begin{cases}
 \frac{x^a}{x^a+1} & 0<x\leq 1, \\
 \frac{x^b}{x^b+1} & x>1,
\end{cases}\qquad a,b>0.$$
For $a,b\geq 1$, the OR of this distribution is increasing almost everywhere, however, for $a>b$, it has a downward jump at $x=1$, so that $F_{a,b}$ cannot be IOR. Figure \ref{f3} shows the remarkable distance between $\overline{T}_n$ and $(\overline{T}_n)$, and between $\F_n$ and $\widetilde{\F}_n$, for a simulated sample of size 100 from $F_{5,1}$. In both cases, these large distances lead to the rejection of $\mathcal{H}_0$.

\begin{figure}
\includegraphics[scale=0.8]{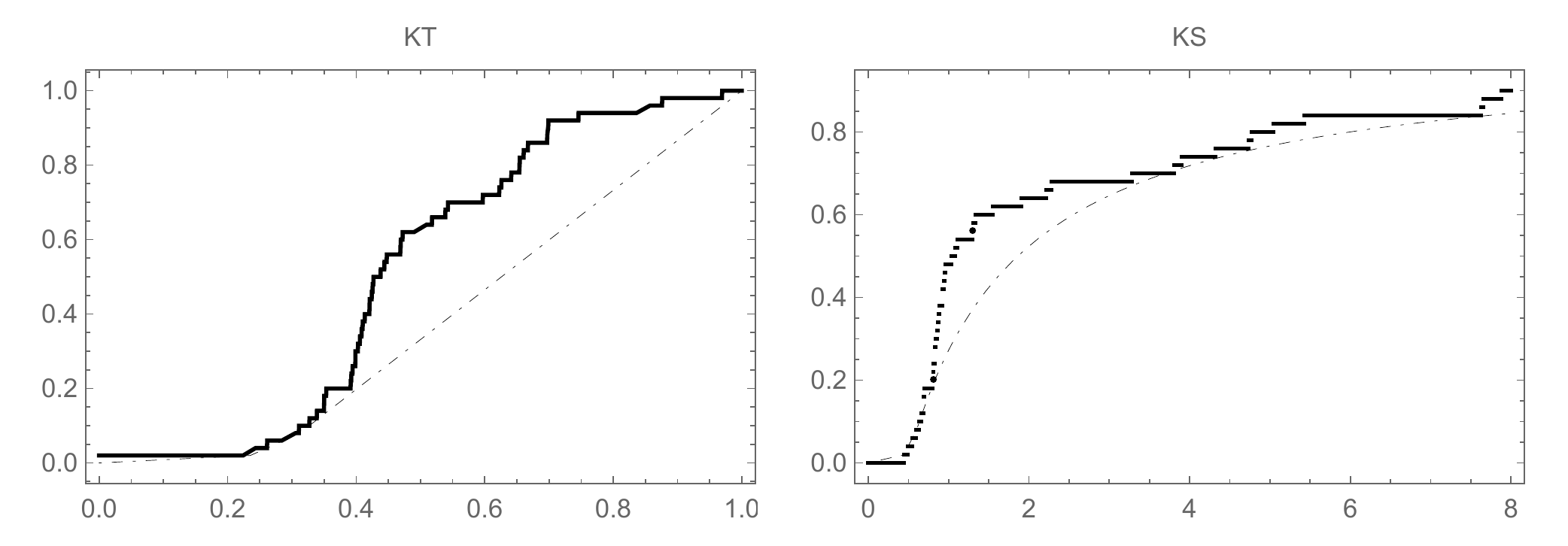}
\caption{Simulated sample ($n=50$) from the non-IOR CDF $F_{5,1}$. $\overline{T}_n$ and $(\overline{T}_n)$ on the left side; $\F_n$ and $\widetilde{\F}_n$ on the right side. In both cases, the two estimators (constrained and unconstrained) diverge for quantile values larger than 0.2, approximately.\label{f3}}
\end{figure}

\textbf{Funding.}{ T.L. was supported by the Italian funds ex MURST 60\% 2021, by the Czech Science Foundation (GACR) under project 20-16764S and V\v{S}B-TU Ostrava under the SGS project SP2021/15. I.A. and P.E.O. were partially supported by the Centre for Mathematics of the University of Coimbra UID/MAT/00324/2020, funded by the Portuguese Government through FCT/MCTES and co-funded by the European Regional Development Fund through the Partnership Agreement PT2020.}

\bibliographystyle{elsarticle-harv}
\bibliography{biblio}





\end{document}